\documentclass[conference]{IEEEtran}
\usepackage{cite,graphicx,amssymb,amsmath,color,textcomp,epsfig}
\usepackage{extarrows,multirow,multicol}
\usepackage{mathabx} 
\usepackage{cleveref} 
\usepackage{bm}
\usepackage{booktabs}
\usepackage{subeqnarray}
\usepackage{subfig}
\usepackage{float}
\usepackage{enumerate}
\usepackage{algorithm}
\usepackage{multicol}
\usepackage{bbm}
\usepackage{algpseudocode}

\usepackage{caption}
\newtheorem{remark}{\underline{Remark}}
\newtheorem{lemma}{Lemma}

\newlength{\figwidth}
\IEEEoverridecommandlockouts
\begin{document}
\bstctlcite{IEEEexample:BSTcontrol}
\title{
Performance Analysis of Flexible Duplex Inter-Satellite Links in LEO Networks \vspace{-0.1cm}}
\author{
\IEEEauthorblockN{Yomali~Lokugama, Charith~Dissanayake, Saman~Atapattu, and 
Kandeepan~Sithamparanathan}
 \IEEEauthorblockA{
Department of Electrical and Electronic Engineering, RMIT University, Victoria, Australia\\
\IEEEauthorblockA{Email:\{yomali.lokugama,\,charith.dissanayake,\,saman.atapattu,\,kandeepan.sithamparanathan\}@rmit.edu.au}\,
}
\vspace{-1cm}

}
\maketitle

\begin{abstract}
This paper investigates energy-efficient inter-satellite communication in Low Earth Orbit (LEO) networks, where satellites exchange both buffered and newly generated data through half-duplex inter-satellite links (ISLs). Due to orbital motion and interference-prone directional asymmetry, the achievable ISL capacities in opposite directions vary dynamically, leading to inefficient utilization under conventional fixed or alternating duplex modes. To address this, we propose a \emph{Flexible Duplex (FlexD)} scheme that adaptively selects the ISL transmission direction in each slot to maximize instantaneous end-to-end sky-to-ground throughput, jointly accounting for ISL quality, downlink conditions, and queue backlogs. A unified analytical framework is developed that transforms the bottleneck rate structure into an equivalent SINR domain, enabling closed-form derivations of throughput outage probability and energy efficiency under deterministic ISLs and Rician satellite-to-ground fading. The analysis reveals distinct operating regions governed by ISL and backlog constraints and provides tractable bounds for ergodic rate and energy efficiency. Numerical results confirm that FlexD achieves higher reliability and up to 30\% improvement in energy efficiency compared with conventional half- and full-duplex schemes under realistic inter-satellite interference conditions.
\end{abstract}

\begin{IEEEkeywords}
Flexible duplex, inter-satellite links, low Earth orbit satellites, energy efficiency, outage, Rician fading.
\end{IEEEkeywords}
\section{Introduction}
As sixth-generation (6G) networks evolve toward global coverage, Low Earth Orbit (LEO) satellite constellations are emerging as a critical enabler for ubiquitous connectivity and low-latency backhaul~\cite{union2022future}. Operating in large-scale formations, each satellite maintains an independent footprint and cooperates with neighboring satellites via inter-satellite links (ISLs) to support multi-hop routing, data relaying, and service continuity~\cite{radhakrishnan2016survey,10937133}.
By leveraging ISLs, satellites can forward buffered or real-time data without immediate ground contact, thereby enhancing spatial coverage and reducing reliance on terrestrial gateways. However, the performance of ISLs is inherently dynamic. Time-varying interference, orbital motion, and propagation geometry cause large fluctuations in the signal-to-interference-plus-noise ratio (SINR), leading to asymmetric link conditions between neighboring satellites~\cite{Cao_Qihang}. At the same time, onboard energy remains a scarce resource, making energy efficiency (EE) a fundamental design objective in LEO networks. Despite extensive work on ISL scheduling and power optimization, the joint impact of link asymmetry, inter-satellite interference (ISI), and data availability on end-to-end performance has received limited analytical attention.

Existing studies can broadly be classified into two categories: system-level optimization and analytical modeling. System-level approaches such as~\cite{Enhancingearth,leyva2021inter,7572177,li2014power,li2020energy,Pi_Jiahao} aim to improve throughput or EE through resource allocation and ISL scheduling. For example, \cite{Enhancingearth} optimizes ISL assignment for throughput enhancement but neglects ISI, while~\cite{leyva2021inter} considers ISI in the scheduling design without analytical characterization. Similarly,~\cite{7572177} and~\cite{li2014power} investigate EE and power allocation under Quality-of-service (QoS) and satellite power constraints, but they disregard the interaction between ISLs and satellite-ground links (SGLs).
To bridge this gap,~\cite{li2020energy} incorporates cooperative ISL-SGL transmissions for EE maximization, whereas~\cite{Pi_Jiahao} applies multi-agent reinforcement learning for ISL link planning. Yet, these frameworks remain optimization-centric, lacking closed-form performance that can provide deeper physical insight.
Analytical works in~\cite{10891202,10622540,9632432} study ISL/LEO systems from a fundamental perspective:~\cite{10891202} analyzes ISL under distance uncertainty without ISI,~\cite{10622540} mitigates interference via beamforming in LEO networks but ignores ISI, and~\cite{9632432} models joint ISL-SGL throughput with static duplexing.
Across all these works, both ISL and SGL links are constrained to operate in either half-duplex (HD) or full-duplex (FD) modes~\cite{10804611}, which are inefficient under dynamic directional asymmetry.

\begin{figure*}[t]
    \centering
    \subfloat[Constellation configuration at $t=\tau$.]{\includegraphics[width=0.33\textwidth]{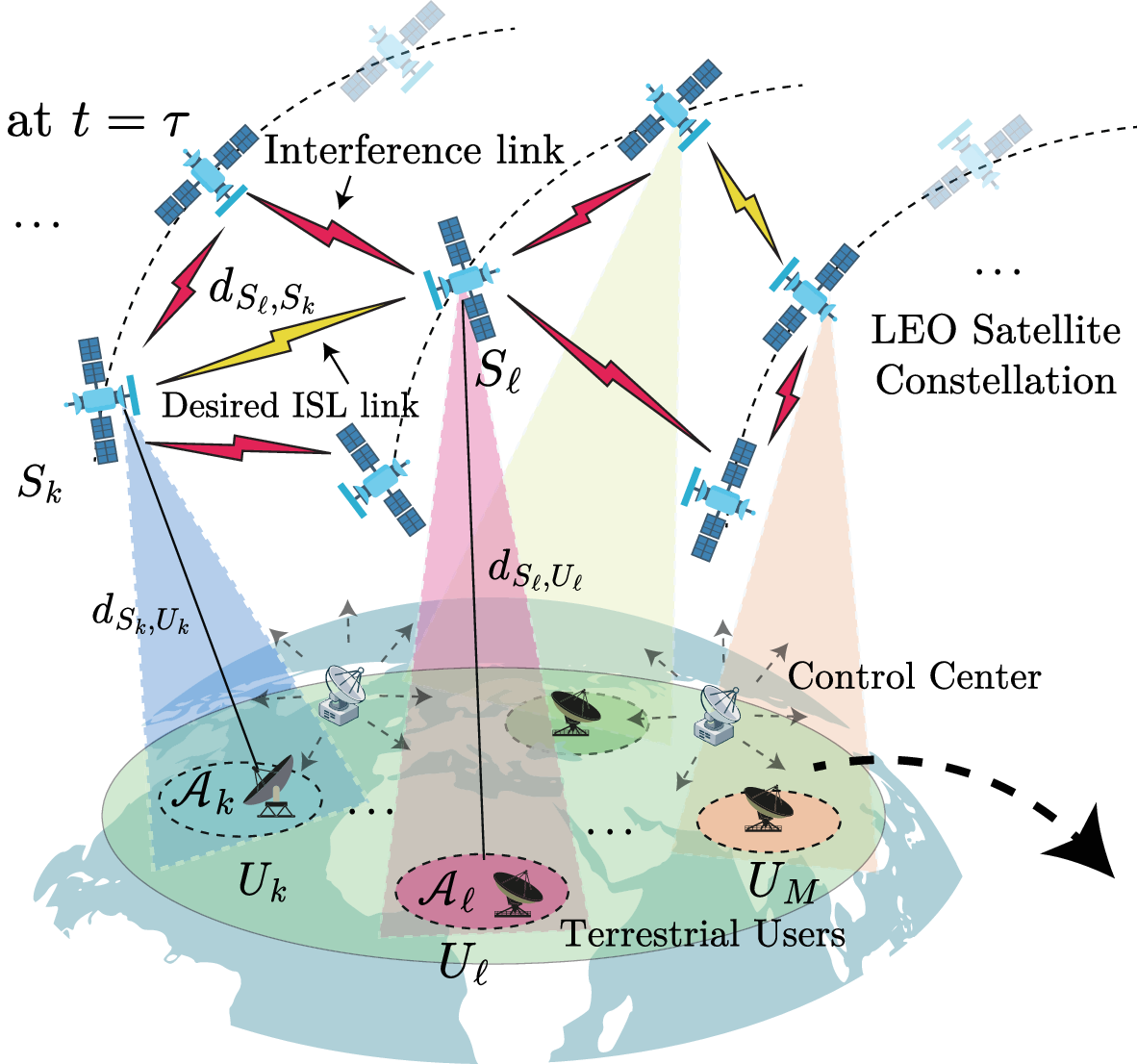}\label{fig_all_network}} 
    \hfill
    \subfloat[Constellation after handover at $t=T_{\rm cov}+\tau$.]{\includegraphics[width=0.33\textwidth]{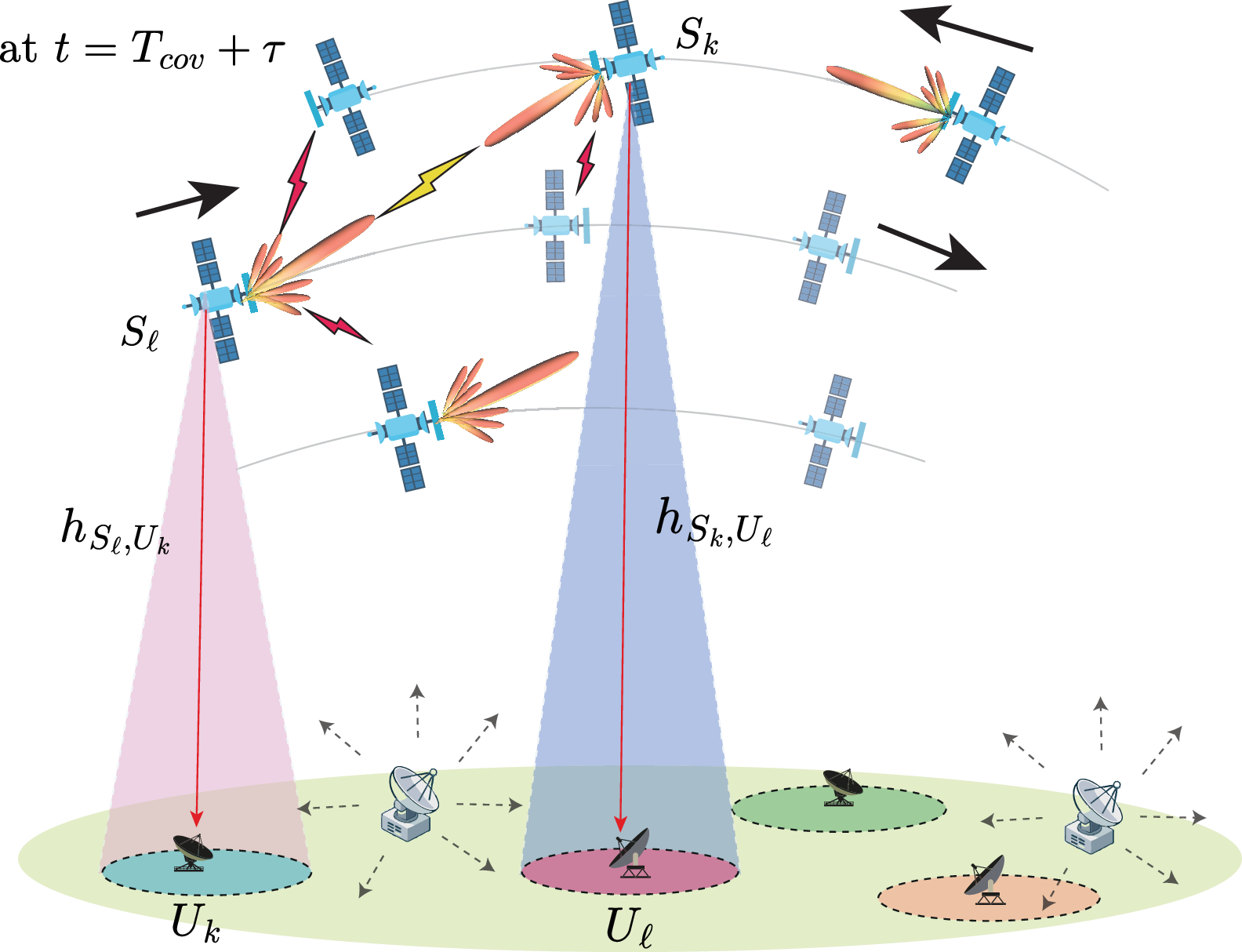}\label{fig:tcov1}} 
    \hfill
    \subfloat[Temporal variation of ISL  for $S_k$ and $S_\ell$.]{\includegraphics[width=0.33\textwidth]{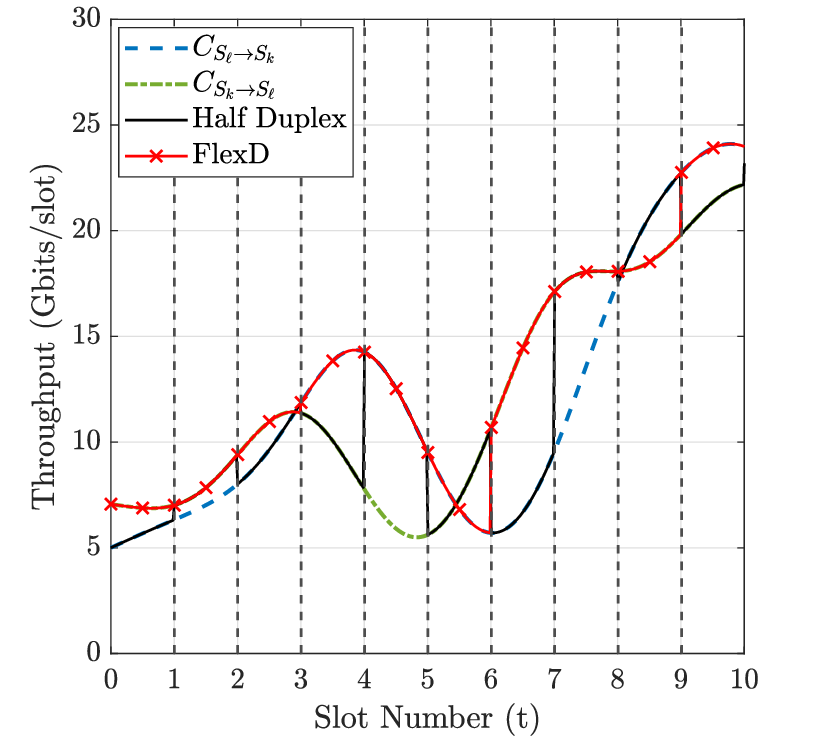}\label{fig:mot}} 
    \caption{System architecture and motivation for the proposed FlexD scheme: (a) initial constellation configuration, (b) post-handover configuration, and (c) comparison of ISL capacities $C_{S_k\to S_\ell}$ and $C_{S_\ell\to S_k}$ under FlexD and conventional HD.}
    \vspace{-4.5mm}
\end{figure*}

In contrast, the concept of \emph{flexible duplex (FlexD)}-recently explored in terrestrial multi-user systems-allows slot-wise adaptation of transmission direction according to instantaneous channel and interference conditions, thereby improving throughput and EE in interference-limited networks~\cite{11196944,Lokugama2025_vtcfall}.
However, its application to satellite systems, particularly to coupled ISL-SGL architectures under mobility and interference, remains unexplored. 
\textit{This work presents the first analytical FlexD framework for LEO satellite networks incorporating ISL-SGL coupling and inter-satellite interference.}

The main contributions are summarized as follows: (i) A novel end-to-end communication model is formulated, where deterministic ISLs collaborate with random fading SGLs to deliver backlogged and new data under dynamic mobility.
(ii) A FlexD strategy is introduced to dynamically select the optimal ISL  direction per slot, maximizing instantaneous end-to-end throughput.
(iii) Closed-form expressions are derived for throughput outage probability and EE, based on a unified SINR-domain of the ISL-downlink bottleneck.
(iv) Numerical results validate the theory, demonstrating that FlexD achieves significantly higher reliability and up to 30\% improvement in EE compared to conventional HD and FD systems under ISI.


\section{System Model}
\label{sec:system}
This section describes the LEO constellation model, emphasizing the coupling between ground coverage and inter-satellite links (ISLs). We first define the constellation geometry and temporal structure, then introduce the two-hop data delivery induced by mobility and buffered traffic, followed by the half-duplex ISL scheduling principle (FlexD).
\subsection{Constellation, Geographic Partition, and Time}
\label{sec:constellation_time}

Consider a LEO constellation comprising  $M$ satellites,
\(
\mathcal{S}=\{S_1,S_2,\dots,S_M\},
\)
serving $M$ non-overlapping regions: 
\[
\mathcal{A}=\{\mathcal{A}_1,\mathcal{A}_2,\dots,\mathcal{A}_M\}, 
\quad \mathcal{A}_i\cap\mathcal{A}_j=\emptyset,\ \forall\, i\neq j,
\]
such that $\bigcup_{i=1}^M\mathcal{A}_i$ spans the entire service theater.  
Each region $\mathcal{A}_i$ contains a representative ground user or gateway $U_i$ that acts as the traffic endpoint.  
As illustrated in Fig.~\ref{fig_all_network}, satellites $S_k$ and $S_\ell$ serve the corresponding regions $\mathcal{A}_k$ and $\mathcal{A}_\ell$, where the ground nodes $U_k$ and $U_\ell$ reside.
A region remains continuously visible to one satellite for a \emph{coverage window} of duration $T_{\rm cov}$, during which the region-satellite association is stable.  
Time is divided into discrete slots indexed by $t\in\mathbb{Z}_{\ge0}$, each of duration $T_{\rm slot}\!\ll\!T_{\rm cov}$ such that geometry, visibility, and channel gains remain quasi-static within a slot.  
Hence, multiple scheduling slots fit within a single coverage window.

At the start of a window ($t=\tau$), the regions $\mathcal{A}_k$ and $\mathcal{A}_\ell$ are served by satellites $S_k$ and $S_\ell$, respectively.  
Due to orbital motion, after approximately $T_{\rm cov}$ seconds these satellites move out of visibility, and service is handed over to adjacent satellites entering the view of $\mathcal{A}_k$ and $\mathcal{A}_\ell$.  
This mutual exchange is denoted by $S_k\!\leftrightarrow\!S_\ell$ at $t=\tau+T_{\rm cov}$, as shown in Fig.~\ref{fig:tcov1}.  
The handover is not instantaneous: the departing satellites $S_k$ and $S_\ell$ may still buffer undelivered data for $U_k$ and $U_\ell$.  
Therefore, an inter-satellite data transfer between $(S_k,S_\ell)$ is required to complete the pending transmissions during or immediately after the handover.

\subsection{Communication Flows and the Two-Hop Delivery Problem}
\label{sec:comm_flows}

During a scheduling time $t=\tau$, the primary downlinks are the one-hop transmissions 
$S_k\!\to\!U_k$ and $S_\ell\!\to\!U_\ell$ (see Fig.~\ref{fig_all_network}), where 
$D_k(t)\!\ge\!0$ and $D_\ell(t)\!\ge\!0$ denote their instantaneous capacities (bits/slot) determined by the PHY-layer model in Sec.~\ref{sec:signalmodel}.  

At the end of a coverage window, satellites may exchange their serving regions. 
If $S_k$ and $S_\ell$ subsequently serve $U_\ell$ and $U_k$, respectively, each may still hold undelivered data from the previous window. 
These residual packets, of sizes $Q_{S_k\to U_k}\!\ge\!0$ and $Q_{S_\ell\to U_\ell}\!\ge\!0$ (bits), must be relayed over the inter-satellite link (ISL), giving rise to two-hop routes
\[
S_k\!\to\!S_\ell\!\to\!U_k, 
\qquad 
S_\ell\!\to\!S_k\!\to\!U_\ell.
\]
For practical and analytical convenience, the backlog terms $Q_{S_k\to U_k}$ and $Q_{S_\ell\to U_\ell}$ are modeled as deterministic quantities representing the available buffered data, while their slower stochastic evolution due to external arrivals is left outside the present scope.  
Two-hop relaying can also occur \emph{within} a coverage window due to persistent cross-satellite transfers, asymmetric link efficiencies, or new arrivals at non-serving satellites.  
Without loss of generality, we consider the representative satellite pair $(S_k,S_\ell)$ exchanging data within the same window and assume sufficient backlog in both directions.

Let $C_{S_k\to S_\ell}(t)\!\ge\!0$ and $C_{S_\ell\to S_k}(t)\!\ge\!0$ denote the instantaneous ISL capacities (bits/slot) in slot~$t$.  
If packets for $U_\ell$ are still buffered at $S_\ell$, the achievable end-to-end two-hop rate over $S_k\!\to\!S_\ell\!\to\!U_k$ and $S_\ell\!\to\!S_k\!\to\!U_\ell$ are constrained by the bottleneck as
\begin{align}
R_{S_k\to U_k}(t)&=
\min\!\big\{\,C_{S_k\to S_\ell}(t),\,D_k(t),\,Q_{S_k\to U_k}(t)\big\} \label{eq:bottleneck_metric1b},\\
R_{S_\ell\to U_\ell}(t)&=
\min\!\big\{\,C_{S_\ell\to S_k}(t),\,D_\ell(t),\,Q_{S_\ell\to U_\ell}(t)\big\}.
\label{eq:bottleneck_metric1a}
\end{align}
Each expression represents the instantaneous half-duplex end-to-end service rate achievable through the ISL-downlink chain in the respective direction.

\begin{remark}[ISL Availability]
The analysis presumes an active bidirectional ISL between $S_k$ and $S_\ell$ during the considered window.  
Practical impairments such as intermittent visibility or pointing loss are mitigated by higher-layer routing and scheduling mechanisms; the present study focuses on the instantaneous PHY-layer behavior when the ISL is available.
\end{remark}


\section{Proposed Flexible Duplex (FlexD) for ISLs}
\label{sec:flexd_principle}

ISLs serve as shared backhaul between neighboring satellites and operate in half-duplex mode on a given band; that is, for any pair $(S_k,S_\ell)$, only one transmission direction can be active per slot.

\subsection{Directional ISL Asymmetry and Motivation}
The instantaneous ISL capacity depends on direction-specific factors such as antenna pointing, residual carrier frequency offset (CFO), relative distance, and interference at the receiving satellite~\cite{Cao_Qihang}.  Let \(\mathcal{I}(t)\) denote the set of satellites that are concurrently transmitting on the same resource block in slot \(t\) and are visible to either \(S_k\) or \(S_{\ell}\). The subsets \(\mathcal{I}_{S_k}(t) \subseteq \mathcal{I}(t)\) and \(\mathcal{I}_{S_\ell}(t) \subseteq \mathcal{I}(t)\) denote the sets of interfering satellites observed by \(S_k\) and \(S_\ell\), respectively.    
Since $\mathcal{I}_{S_k}(t)\!\neq\!\mathcal{I}_{S_\ell}(t)$ in general due to differing spatial geometry, field-of-view (FOV), and scheduling, the directional ISL capacities $C_{S_k\to S_\ell}(t)$ and $C_{S_\ell\to S_k}(t)$ are typically asymmetric and time-varying even within a single coverage window.

Fig.~\ref{fig:mot} illustrates the temporal variation of the two ISL capacities $C_{S_k\to S_\ell}(t)$ and $C_{S_\ell\to S_k}(t)$.  
The forward and reverse capacities show strong slot-level asymmetry and distinct long-term trends.  
Conventional half-duplex operation, which alternates direction in a fixed or pre-assigned pattern slot-wise, cannot efficiently exploit these variations.  
This motivates an adaptive slot-by-slot selection mechanism that always utilizes the stronger direction.  
Furthermore, because the end-to-end service rate also depends on the downlink capacity and available backlog ($D_k$, $D_\ell$, $Q_{S_k\to U_k}$, $Q_{S_\ell\to U_\ell}$), opportunistic direction selection can substantially enhance the overall two-hop throughput-forming the basis of the proposed \emph{Flexible Duplex (FlexD)} scheme.

\subsection{FlexD Direction Selection}
The instantaneous two-hop service rates (bits/slot) for the two flows $S_k\to U_k$ and $S_\ell\to U_\ell$ are given by $R_{S_k\to U_k}(t)$ and $R_{S_\ell\to U_\ell}(t)$ from \eqref{eq:bottleneck_metric1b} and \eqref{eq:bottleneck_metric1a}, representing the effective ISL-downlink rates in each direction.  
FlexD dynamically selects the active half-duplex direction $d^\star(t)$ in each slot to maximize the instantaneous end-to-end throughput:
\begin{align}
d^\star(t)&=\arg\max_{d\in\{S_k\to U_k,\ S_\ell\to U_\ell\}} R_d(t),\\
R^\star(t)&=\max\{R_{S_k\to U_k}(t),\,R_{S_\ell\to U_\ell}(t)\}.
\label{eq:flexd_decision}
\end{align}
This per-slot rule allocates the ISL to the direction offering the highest achievable throughput. The network is coordinated by a  \emph{control center}, possessing environment awareness,
which has access to satellite and user locations. 

\begin{remark}[Relation to SINR]
If $C_{S_k\to S_\ell}(t)=W_{\rm isl}\log_2(1+\gamma_{S_k\to S_\ell}(t))$ and $C_{S_\ell\to S_k}(t)=W_{\rm isl}\log_2(1+\gamma_{S_\ell\to S_k}(t))$, maximizing $R_d(t)$ reduces to maximizing $\gamma_d(t)$ only when the ISL link is the bottleneck, i.e., $C(t)\!\le\!D, Q$ in \eqref{eq:bottleneck_metric1b}-\eqref{eq:bottleneck_metric1a}.  
Otherwise, downlink or backlog constraints dominate, and the full min-structure must be retained.  
The next section analyzes the SINR distributions and their effect on the FlexD rate.
\end{remark}

\section{SINR Analysis and Physical-Layer Model}
\label{sec:signalmodel}

\subsection{From Bottleneck Rates to SINR Equivalents}
For a satellite pair $(S_\ell,S_k)$, the per-slot two-hop service rate in the direction $S_\ell\!\to\!U_\ell$ is given by \eqref{eq:bottleneck_metric1a}
with $C_{S_\ell\to S_k}(t)$ is the ISL capacity, $D_\ell(t)$ the downlink capacity to user $U_\ell$, and $Q_{S_\ell\to U_\ell}(t)$ (bits) the backlog stored at $S_\ell$ for $U_\ell$.

The ISL and downlink capacities are related to their instantaneous SINRs by
\begin{align}
C_{S_\ell\to S_k}(t)&=W_{\rm isl}\log_2\!\big(1+\gamma_{S_\ell\to S_k}(t)\big),\\
D_\ell(t)&=W_{\rm dl}\log_2\!\big(1+\gamma_{S_k\to U_\ell}(t)\big),
\end{align}
where $W_{\rm isl}$ and $W_{\rm dl}$ are the respective bandwidths.  
In general, \eqref{eq:bottleneck_metric1a} cannot be written directly as a ``minimum over SINRs’’ unless (i) a common bandwidth is assumed and (ii) the backlog term is expressed as an equivalent rate.  
Assuming $W_{\rm isl}=W_{\rm dl}=W$, the backlog-equivalent SINR is defined as
\begin{equation}
\gamma_{Q_{S_\ell\to U_\ell}}^{(W)}(t)\triangleq 2^{Q_{S_\ell\to U_\ell}(t)/W}-1.
\label{eq:backlog}
\end{equation}
Then the rate in \eqref{eq:bottleneck_metric1a} can be equivalently expressed as
\[
R_{S_\ell\to U_\ell}(t)=
W\log_2\!\Big(1+\gamma_{S_\ell\to U_\ell}(t)\Big),
\]
with corresponding SINR term
\begin{align}
    \gamma_{S_\ell\to U_\ell}(t)\hspace{-0.1cm}=\hspace{-0.1cm}\min\{{\gamma_{S_\ell\to S_k}(t)},\;
{\gamma_{S_k\to U_\ell}(t)},\;
{\gamma_{Q_{S_\ell\to U_\ell}}^{(W)}(t)}\}\label{eq:rate_to_sinr_equiv1}.
\end{align}
The reverse direction rate $R_{S_k\to U_k}(t)$ follows by interchanging $k$ and $\ell$.  
If $W_{\rm isl}\neq W_{\rm dl}$, the rate-domain forms in \eqref{eq:bottleneck_metric1b}-\eqref{eq:bottleneck_metric1a} should be used directly for analysis.

Henceforth, for notational simplicity, we define the large-scale gains and average SNR between nodes $i$ and $j$ as 
\begin{align}
\label{pathloss_gain}
    \alpha_{i,j}(t)=\frac{G_{i,j}G_{j,i}c^2}{(4\pi f\,d_{i,j}(t))^2},\qquad  \bar{\gamma}_{i,j}(t)=\frac{P_i \alpha_{i,j}(t)}{\sigma^2_j}
\end{align}
respectively. Here, $G_{i,j}$ denotes the directional antenna gain (e.g., UPA/ULA, including Doppler-compensated pointing)~\cite{balanisbook,dissanayake2025uniform}, 
$d_{i,j}(t)$ the euclidean distance between node $i$ and $j$, $f$ the carrier frequency, $c$ the speed of light, $P_i$ is the transmit power of node $i$ and $\sigma^2_j$ is the receiver noise power at node $j$ modeled as AWGN $\sim \mathcal{CN}(0,\sigma_{j}^2)$.  
\subsection{Deterministic ISL Link SINR: $S_\ell\!\to\!S_k$}
The received SINR at $S_k$ in slot $t$ is
\begin{equation}
\gamma_{S_\ell\to S_k}(t)=
\frac{\bar{\gamma}_{S_\ell,S_k}(t)}
{\displaystyle\sum_{S_n\in\mathcal{I}_{S_k}(t)}\bar{\gamma}_{S_n,S_k}(t)+1},
\label{eq:isl_sinr}
\end{equation}
where $\bar{\gamma}_{S_\ell,S_k}(t)$ and $\bar{\gamma}_{S_n,S_k}(t)$ obtained from \eqref{pathloss_gain} and $\mathcal{I}_{S_k}(t)$ is the co-channel interferers set visible to $S_k$ at slot $t$. 

\subsection{Random Downlink SINR: $S_k\!\to\!U_\ell$}
The received SINR at $U_\ell$ in slot $t$ is
\begin{equation}
\gamma_{S_k\to U_\ell}(t)=
\bar{\gamma}_{S_k,U_\ell}(t)\,|h_{S_k,U_\ell}(t)|^2,
\label{eq:dl_sinr}
\end{equation}
where $\bar{\gamma}_{S_k,U_\ell}(t)$ obtained from \eqref{pathloss_gain} and $h_{S_k,U_\ell}(t)$ represents the  small-scale Rician fading $\sim\mathcal{CN}(\mu,\sigma_g^2)$  which accurately captures the directional satellite-ground propagation~\cite{Zhao_Meihui}. Thus the CDF of $\gamma_{S_k\to U_\ell}$ is given by
\begin{align}\label{eq:cdf_rice}
F_{X}(\bar{\gamma}_{S_k,U_\ell};x)&=1-Q_1\!\left(\tfrac{|\mu|}{\sqrt{\sigma_g^2/2}},
\sqrt{\tfrac{2x}{\bar{\gamma}_{S_k,U_\ell}\sigma_g^2}}\right),
\end{align}
where $Q_1(\cdot,\cdot)$ denotes the first-order Marcum $Q$-function. 
Similarly the CDF of $\gamma_{S_\ell\to U_k}$ can be written as $F_{X}(\bar{\gamma}_{S_\ell,U_k};x)$. 
 
Corresponding SINR terms for the reverse direction $S_k\!\to\!U_k$ follows analogously by index  $k$ and $\ell$ interchange.
\section{FlexD System Performance Analysis}
The performance is evaluated under two primary metrics (i) Throughput Outage Probability, and (ii) Energy Efficiency.
\subsection{Throughput Outage Probability}
\label{sec:non_rec}

The throughput outage probability quantifies the likelihood that the instantaneous two-hop rate falls below a target~$\delta$.  
With a pre-log factor of~${1}/{2}$ due to HD operation, it is defined as
\begin{equation}
\label{outage_def}
P_{\mathrm{o}}=\Pr\!\left(\tfrac{W}{2}\log_2(1+\Gamma)\le\delta\right),
\end{equation}
where $\zeta\!\triangleq\!2^{(2\delta)/W}-1$ is the equivalent SINR threshold, and
\begin{equation}
\label{system_SINR}
\Gamma=\max\{\gamma_{S_\ell\to U_\ell}(t),\,\gamma_{S_k \to U_k}(t)\},
\end{equation}
with composite SINR terms $\gamma_{S_\ell\to U_\ell}(t)$ and $\gamma_{S_k\to U_k}(t)$ can be found using \eqref{eq:rate_to_sinr_equiv1} and interchanging indexes $k,\ell$ respectively. 

\begin{lemma}[Throughput Outage Probability]
\label{lem:outage}
The system outage probability $P_{\mathrm{o}}=\Pr(\Gamma\le\zeta)$ is expressed as
\begin{align}
\label{eq:CDF_gamma_star}
P_{\rm o}(\zeta)\hspace{-0.1cm}=\hspace{-0.1cm}
\begin{cases}
F_{X}(\bar{\gamma}_{S_k,U_\ell};{\zeta})F_{X}(\bar{\gamma}_{S_\ell,U_k};{\zeta}),\hspace{-0.1cm} & \zeta < \Omega_{\min}, \\[2pt]
F_{X}(\bar{\gamma}_{S_\ell,U_k};{\zeta}), & \hspace{-0.1cm}\Omega_{\ell}\le \zeta < \Omega_{k}, \\[2pt]
F_{X}(\bar{\gamma}_{S_k,U_\ell};{\zeta}), & \hspace{-0.1cm}\Omega_{k}\le \zeta < \Omega_{\ell}, \\[2pt]
1, & \hspace{-0.1cm}\zeta \ge \Omega_{\max},
\end{cases}
\end{align}
where $F_X(\cdot,\cdot)$ denotes the Rician CDF defined in~\eqref{eq:cdf_rice}.  
The deterministic cut levels are
\[
\Omega_{\ell}=\min\{\gamma_{S_\ell\to S_k},\,\gamma_{Q_{S_\ell\to U_\ell}}^{(W)}\},\,
\Omega_{k}\hspace{-0.1cm}=\hspace{-0.1cm}\min\{\gamma_{S_k\to S_\ell},\,\gamma_{Q_{S_k\to U_k}}^{(W)}\},
\]
and $\Omega_{\min}=\min(\Omega_{\ell},\Omega_{k})$, $\Omega_{\max}=\max(\Omega_{\ell},\Omega_{k})$.
\end{lemma}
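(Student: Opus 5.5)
The plan is to rewrite each composite SINR as a minimum of one deterministic cut level and one random downlink term, and then use independence of the two downlinks. From \eqref{eq:rate_to_sinr_equiv1}, the ISL SINR \eqref{eq:isl_sinr} and the backlog-equivalent SINR \eqref{eq:backlog} are deterministic within a slot. Hence
\begin{align*}
\gamma_{S_\ell\to U_\ell}&=\min\{\Omega_\ell,\,\gamma_{S_k\to U_\ell}\},\\
\gamma_{S_k\to U_k}&=\min\{\Omega_k,\,\gamma_{S_\ell\to U_k}\},
\end{align*}
where $\gamma_{S_k\to U_\ell}=\bar{\gamma}_{S_k,U_\ell}|h_{S_k,U_\ell}|^2$ and $\gamma_{S_\ell\to U_k}=\bar{\gamma}_{S_\ell,U_k}|h_{S_\ell,U_k}|^2$ are the only random quantities. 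I will assume the two satellite-ground fading coefficients are independent, since they are distinct links to distinct users. Then, by \eqref{system_SINR},
\[
\Pr(\Gamma\le\zeta)=\Pr\big(\min\{\Omega_\ell,\gamma_{S_k\to U_\ell}\}\le\zeta\big)\,\Pr\big(\min\{\Omega_k,\gamma_{S_\ell\to U_k}\}\le\zeta\big).
\]

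The key elementary step is the CDF of a minimum of a constant and a random variable. For deterministic $\Omega$ and random $Y$,
\[
\Pr(\min\{\Omega,Y\}\le\zeta)=\mathbbm{1}\{\zeta\ge\Omega\}+\mathbbm{1}\{\zeta<\Omega\}\,F_Y(\zeta).
\]
If $\zeta\ge\Omega$ the event is certain. Otherwise the minimum is at most $\zeta$ exactly when $Y\le\zeta$. Applying this to each factor, with $F_Y$ given by the Rician CDF \eqref{eq:cdf_rice}, the first factor is $F_X(\bar{\gamma}_{S_k,U_\ell};\zeta)$ for $\zeta<\Omega_\ell$ and $1$ otherwise. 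The second factor is $F_X(\bar{\gamma}_{S_\ell,U_k};\zeta)$ for $\zeta<\Omega_k$ and $1$ otherwise.

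Multiplying the two factors, I would split into cases according to the position of $\zeta$ relative to $\Omega_\ell$ and $\Omega_k$:
\begin{itemize}
\item $\zeta<\Omega_{\min}$: both factors are CDFs, which gives the product in the first branch of \eqref{eq:CDF_gamma_star}.
\item $\zeta\ge\Omega_{\max}$: both factors equal $1$.
\item $\Omega_\ell\le\zeta<\Omega_k$: the first factor is $1$ and only $F_X(\bar{\gamma}_{S_\ell,U_k};\zeta)$ survives.
\item $\Omega_k\le\zeta<\Omega_\ell$: symmetrically, only $F_X(\bar{\gamma}_{S_k,U_\ell};\zeta)$ survives.
\end{itemize}
These four regions partition the half-line, because the two intermediate cases are mutually exclusive and cover $[\Omega_{\min},\Omega_{\max})$ depending on which cut level is smaller.

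There is no real analytic obstacle; the main care point is justifying independence of the two downlink fades and the deterministic treatment of ISL and backlog terms within a slot. The boundary conventions also need checking: equality $\zeta=\Omega$ goes into the ``certain outage'' side because the outage event in \eqref{outage_def} uses $\le$. I would also note that the resulting CDF has jumps at $\Omega_{\min}$ and $\Omega_{\max}$, reflecting the point masses of the composite SINRs at their cut levels.
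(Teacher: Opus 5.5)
Your proposal is correct and follows essentially the same route as the paper. You write each composite SINR as $\min\{\Omega,\cdot\}$ of a deterministic cut level and an independent Rician downlink term, factor $\Pr(\Gamma\le\zeta)=F_Y(\zeta)F_Z(\zeta)$, use the piecewise CDF of the truncated variable, and multiply case by case; your remarks on the boundary convention at $\zeta=\Omega$ and on the partition of $[0,\infty)$ simply make explicit what the paper's sketch leaves implicit.
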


\begin{IEEEproof}[Proof Sketch]
Let $Y=\gamma_{S_\ell\to U_\ell}=\min\{\Omega_{\ell},H_\ell\}$ and 
$Z=\gamma_{S_k\to U_k}=\min\{\Omega_{k},H_k\}$, 
so that $\Gamma=\max\{Y,Z\}$.  
For any $\zeta\ge0$,
$\Pr(\Gamma\le\zeta)=\Pr(Y\le\zeta,\,Z\le\zeta)
=F_Y(\zeta)F_Z(\zeta),$ since $Y$ and $Z$ are independent. The CDF of $Y$ is given as 
\begin{align}
F_Y(\zeta)&=
\begin{cases}
F_{X}(\bar{\gamma}_{S_k,U_\ell};\zeta), & \zeta<\Omega_{\ell},\\
1, & \zeta\ge\Omega_{\ell},
\end{cases}
\end{align}
Similarly, $F_Z(\zeta)$ can be found and the product of $F_Y(\zeta)$ and $F_Z(\zeta)$ yields \eqref{eq:CDF_gamma_star}.  
Finally, substituting $\zeta=2^{(2\delta)/W}-1$ maps the throughput threshold $\delta$ to the equivalent SINR threshold.
\end{IEEEproof}

\begin{remark}[Interpretation of Lemma~1 under No-Backlog Conditions]
When both inter-satellite exchange queues are empty, i.e., $Q_{S_\ell\to U_\ell}=Q_{S_k\to U_k}=0$, no relay traffic exists between $S_k$ and $S_\ell$. In this case, $R_{S_\ell\to U_\ell}(t)=R_{S_k\to U_k}(t)=0
\Longleftrightarrow
\gamma_{S_\ell\to U_\ell}(t)=\gamma_{S_k\to U_k}(t)=0$ and the ISL remains idle. The system therefore reduces to a purely \emph{direct downlink} mode, with instantaneous one-hop capacities
$
D_k(t)\hspace{-0.1cm}=\hspace{-0.1cm}W\hspace{-0.1cm}\log_2\,\!(1\hspace{-0.1cm}+\gamma_{S_\ell\to U_k}(t)),\hspace{-0.05cm}$ and $
D_\ell(t)\hspace{-0.1cm}=\hspace{-0.1cm}W\hspace{-0.1cm}\log_2\,\!(1\hspace{-0.1cm}+\gamma_{S_k\to U_\ell}(t)).
$
Substituting these into Lemma~1 yields decoupled per-user outage probabilities, $
P_{{\rm o},k}(\zeta)=F_{X}(\bar{\gamma}_{S_\ell,U_k};{\zeta}),\,
P_{{\rm o},\ell}(\zeta)=F_{X}(\bar{\gamma}_{S_k,U_\ell};{\zeta}).$ 
In practice, such links are excluded from the FlexD set during no-backlog slots, or equivalently modeled via an activation indicator $R_d(t)=\mathbbm{1}\{Q_d(t)>0\}\min\{\cdot\}$ to capture traffic-dependent ISL participation.

\end{remark}

\subsection{Energy Efficiency (EE)}
Energy efficiency (EE) measures how effectively a system converts power into transmitted information, defined as the ratio of ergodic throughput $\bar{R}$ to total power $P_T$:

\begin{equation}
\mathrm{EE} = 
{\bar{R}}/{P_{\mathrm{T}}},
\label{eq:EE_def}
\end{equation}
which is given in closed form in the following lemma, where $\mathrm{EE}$ is approximated as

\begin{lemma}[Energy Efficiency]
\label{lem:ee}
The system energy efficiency (EE) is approximated by
\begin{equation}
\mathrm{EE} \approx \frac{W}{2P_{\mathrm{T}}}
\log_2\!\big(1+\mathbb{E}[\Gamma]\big),
\label{eq:Rbar_approx}
\end{equation}
where the average system SINR $\mathbb{E}[\Gamma]$ is
\begin{equation}
\label{eq_sinr_energy}
\mathbb{E}[\Gamma]=
\begin{cases}
\mathcal{F}+\mathcal{G}(A_{S_k,U_\ell},\Omega_{\min},\Omega_{\max}), & \Omega_{\ell}\!\ge\!\Omega_{k},\\[2pt]
\mathcal{F}+\mathcal{G}(A_{S_\ell,U_k},\Omega_{\min},\Omega_{\max}), & \text{otherwise},
\end{cases}
\end{equation}
where $\mathcal{F}$ and $\mathcal{G}(\cdot)$ are given in~(22)-(24) and terms
\[A_{S_k,U_\ell}={1}/{\bar{\gamma}_{S_k, U_\ell}\sigma_g^2}, \quad
A_{S_\ell,U_k}={1}/{\bar{\gamma}_{S_\ell, U_k}\sigma_g^2}.\]
The slot-wise constants $\Omega_{\ell},\Omega_{k}$ and $\Omega_{\max},\Omega_{\min}$ are as defined in Lemma~\ref{lem:outage}. Substituting \eqref{eq_sinr_energy} into \eqref{eq:Rbar_approx} yields a tight analytical upper bound for $\mathrm{EE}$ (see Fig.~\ref{fig_all}).
\end{lemma}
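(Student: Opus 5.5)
The approximation \eqref{eq:Rbar_approx} follows from Jensen's inequality. Since $\log_2(1+x)$ is concave, $\bar R=\tfrac{W}{2}\mathbb{E}[\log_2(1+\Gamma)]\le \tfrac{W}{2}\log_2(1+\mathbb{E}[\Gamma])$. Dividing by $P_{\rm T}$ gives the stated approximation, and it is an upper bound, as claimed in the lemma. The substantive work is computing $\mathbb{E}[\Gamma]$ in closed form.

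To compute $\mathbb{E}[\Gamma]$, I would use the CDF of Lemma~\ref{lem:outage} together with the tail identity $\mathbb{E}[\Gamma]=\int_0^\infty (1-P_{\rm o}(x))\,dx$. This holds because $\Gamma\ge0$ and $\Gamma\le\Omega_{\max}$ almost surely, so the integral truncates at $\Omega_{\max}$.

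Assume first $\Omega_\ell\ge\Omega_k$, so $\Omega_{\min}=\Omega_k$ and $\Omega_{\max}=\Omega_\ell$. On $[0,\Omega_{\min})$ the integrand is $1-F_X(\bar\gamma_{S_k,U_\ell};x)F_X(\bar\gamma_{S_\ell,U_k};x)$. On $[\Omega_{\min},\Omega_{\max})$ it is $1-F_X(\bar\gamma_{S_k,U_\ell};x)=Q_1(\cdot)$, which depends only on $A_{S_k,U_\ell}$. This splits the expectation into two pieces:
\[
\mathcal F=\int_0^{\Omega_{\min}}\big(1-F_X(\bar\gamma_{S_k,U_\ell};x)F_X(\bar\gamma_{S_\ell,U_k};x)\big)dx,
\qquad
\mathcal G(A,a,b)=\int_a^b Q_1\!\big(\beta,\sqrt{2Ax}\big)\,dx,
\]
where $\beta=|\mu|/\sqrt{\sigma_g^2/2}$. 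The case $\Omega_\ell<\Omega_k$ is identical with $A_{S_\ell,U_k}$ in place of $A_{S_k,U_\ell}$, which gives the two branches of \eqref{eq_sinr_energy}. The term $\mathcal F$ is symmetric in the two links, so it is common to both cases.

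For $\mathcal G$, I would substitute $u=\sqrt{2Ax}$, so that $dx=u\,du/A$, and integrate by parts using $\partial_u Q_1(\beta,u)=-u\,e^{-(u^2+\beta^2)/2}I_0(\beta u)$. The result is
\[
\tfrac{u^2}{2A}Q_1(\beta,u)\Big|_{u_a}^{u_b}+\tfrac{1}{2A}\int_{u_a}^{u_b}u^3e^{-(u^2+\beta^2)/2}I_0(\beta u)\,du .
\]
The remaining integral is a standard incomplete Rician moment. It can be written in closed form through $Q_1$, $Q_2$ (or $Q_1$ combined with $e^{-(\cdot)}I_0$ and $I_1$ terms), using the identity $\int_c^\infty u^{3}e^{-(u^2+\beta^2)/2}I_0(\beta u)du=(2+\beta^2)Q_1(\beta,c)+c\,e^{-(c^2+\beta^2)/2}\big(cI_0(\beta c)+\beta I_1(\beta c)\big)$, which holds up to the exact form I would verify by differentiation.

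The main obstacle is $\mathcal F$, because it contains the product of two Marcum-$Q$ CDFs. I would expand $1-F_1F_2=Q^{(1)}+Q^{(2)}-Q^{(1)}Q^{(2)}$. The linear terms are handled exactly as $\mathcal G(A_{S_k,U_\ell},0,\Omega_{\min})+\mathcal G(A_{S_\ell,U_k},0,\Omega_{\min})$. For the cross term I would use the series $Q_1(\beta,\sqrt{2Ax})=e^{-\beta^2/2}\sum_{n\ge0}\tfrac{(\beta^2/2)^n}{n!}\tfrac{\Gamma(n+1,Ax)}{n!}$. Because $n$ is an integer, each incomplete gamma function is a finite sum of the form $e^{-Ax}\sum_{j}\tfrac{(Ax)^j}{j!}$. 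The product therefore integrates termwise into lower incomplete gamma functions $\gamma(i+j+1,(A_1+A_2)\Omega_{\min})$, which yields a (truncated) double-series closed form. That closed form is what I would report as $\mathcal F$ in (22)--(24).
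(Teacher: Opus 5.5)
Your argument is correct and has the same skeleton as the paper's proof: Jensen's inequality for the bound, the tail integral $\mathbb{E}[\Gamma]=\int_0^{\Omega_{\max}}(1-F_YF_Z)\,dx$, and the split at $\Omega_{\min}$ into $\mathcal{F}$ and $\mathcal{G}$. The difference is in how you evaluate the pieces.

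The paper applies the Poisson-mixture series of the Marcum-$Q$ function everywhere, with weights $w_m=e^{-\mu^2/\sigma_g^2}(\mu/\sigma_g)^{2m}/m!$. This turns $\mathcal{G}$ into $\sum_m w_m[\Phi_m(A,\Omega_{\max})-\Phi_m(A,\Omega_{\min})]$. It gets $\mathcal{F}$ by expanding the product $F_YF_Z$ directly, which gives $\Omega_{\min}-\sum_{m,n}w_mw_n[\Omega_{\min}-\Phi_m-\Phi_n+\Psi_{mn}]$.

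You instead evaluate $\mathcal{G}$ exactly by parts. Your incomplete Rice second-moment identity is right as stated: the right-hand side has derivative $-c^3e^{-(c^2+\beta^2)/2}I_0(\beta c)$ and vanishes as $c\to\infty$. By writing $1-F_1F_2=Q^{(1)}+Q^{(2)}-Q^{(1)}Q^{(2)}$, you need a series only for the cross term. That cross term is exactly $\sum_{m,n}w_mw_n\Psi_{mn}$, with $\Psi_{mn}$ as in (24).

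\textbf{What each route buys.} The paper's route is uniform and produces only exponentials and polynomials. Yours is exact outside the cross term. Truncating that nonnegative cross-term series can only increase $\mathcal{F}$, so your finite-$M$ expression is still a guaranteed upper bound on EE. In the paper's form, truncation under-estimates $\mathcal{G}$ and over-estimates $\mathcal{F}$, so the direction of the bound is only assured as $M\to\infty$.

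\textbf{Matching (22)--(23).} To recover the paper's formulas literally, expand your linear terms with the same series, $\int_0^{\Omega}Q^{(i)}\,dx=\sum_m w_m\Phi_m(A_i,\Omega)$, and use $\sum_m w_m=1$. Doing this, you will see that the printed formulas carry a redundant factor $1/A$ (resp.\ $1/\epsilon$) on $\Phi_m$, whose definition already includes $1/A$. You will also see that the exponent $2mn$ should read $2(m+n)$.
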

\begin{IEEEproof}[Proof Sketch]
The system’s ergodic throughput, derived by taking the expectation $\mathbb{E}[\cdot]$ over \eqref{eq:flexd_decision} with \eqref{system_SINR}, as
\begin{equation}
\bar{R}\hspace{-0.1cm} =\hspace{-0.1cm}  \mathbb{E}[R^\star] 
\hspace{-0.1cm} = \hspace{-0.1cm} \mathbb{E}[\tfrac{W}{2}\hspace{-0.1cm}\log_2(1\hspace{-0.05cm}+\hspace{-0.05cm}\Gamma)]\hspace{-0.1cm} =\hspace{-0.2cm}  \int_{0}^{\infty} \hspace{-0.1cm} \frac{W}{2}\hspace{-0.1cm}\log_2(1 \hspace{-0.025cm}+ \hspace{-0.025cm}x) 
\hspace{-0.025cm}f_{\Gamma}(x)\hspace{-0.025cm} \, dx,
\label{eq:Rerg}
\end{equation} 
where $f_{\Gamma}(\cdot)$ is the PDF corresponding to the CDF of $\Gamma$ in \eqref{eq:CDF_gamma_star}.
\begin{figure*}[t]
\setcounter{equation}{23} 
\begin{align}
& \mathcal{F} \hspace{-0.1cm}= \Omega_{\min}\hspace{-0.1cm}
- e^{-\frac{2\mu^2}{\sigma_g^2}}\hspace{-0.1cm}\sum_{m=0}^{M} \sum_{n=0}^{M}\hspace{-0.1cm}
\frac{(\mu/\sigma_g)^{2mn}}{m!n!}
\!\left[
    \Omega_{\min} 
    \hspace{-0.1cm}- \frac{\Phi_m(A_{S_\ell, U_k}, \Omega_{\min})}{A_{S_\ell, U_k}}
    - \frac{\Phi_n(A_{S_k, U_\ell}, \Omega_{\min})}{A_{S_k, U_\ell}}
   \hspace{-0.1cm} +\hspace{-0.1cm} \Psi_{mn}(A_{S_\ell, U_k}, A_{S_k, U_\ell}, \Omega_{\min})
\right],
\tag{22}\\[-1mm]
&\mathcal{G}(\epsilon,\Omega_{\min},\Omega_{\max}) \hspace{-0.1cm}= \hspace{-0.1cm}e^{-\mu^2/\sigma_g^2}
\sum_{m=0}^{M}
\frac{(\mu/\sigma_g)^{2m}}{m!\,\epsilon}
\big[
    \Phi_m(\epsilon, \Omega_{\max})
    - \Phi_m(\epsilon, \Omega_{\min})
\big], \,
\Phi_m(A,S) = \frac{1}{A} 
\sum_{i=0}^{m} 
\left[1 - e^{-A S}\sum_{r=0}^{i}\frac{(A S)^r}{r!}\right],
\tag{23}\\[-1mm]
&\Psi_{mn}(A_{S_\ell, U_k},A_{S_k, U_\ell},S)
\hspace{-0.1cm}=\hspace{-0.15cm} \sum_{i=0}^{m}\sum_{j=0}^{n}\hspace{-0.05cm}
\frac{A_{S_\ell, U_k}^i A_{S_k, U_\ell}^j(i+j)!}{i!\,j!\,(A_{S_\ell, U_k}+A_{S_k, U_\ell})^{i+j+1}}
\!\left[1 - e^{-(A_{S_\ell, U_k}+A_{S_k, U_\ell})S}
\sum_{r=0}^{i+j}\frac{\big((A_{S_\ell, U_k}+A_{S_k, U_\ell})S\big)^r}{r!}\right].\tag{24}
\label{eq:F_G_defs}
\end{align}
\rule{\textwidth}{0.5pt}
\vspace{-4mm}
\end{figure*}
Since a closed-form evaluation of \eqref{eq:Rerg} is generally intractable, we employ Jensen’s inequality.  
Because $\log(1+x)$ is concave, 
\begin{equation}
\setcounter{equation}{21}
\mathbb{E}\!\left[\tfrac{W}{2}\log_2(1+\Gamma)\right] 
\le \tfrac{W}{2}\log_2\!\big(1+\mathbb{E}[\Gamma]\big).
\label{eq:Jensen}
\end{equation}
Let $Y=\gamma_{S_\ell\to U_\ell}=\min\{\Omega_{\ell},H_\ell\}$ and 
$Z=\gamma_{S_k\to U_k}=\min\{\Omega_{k},H_k\}$, so that $\Gamma=\max\{Y,Z\}$.  
Using the tail-sum representation,
\(
\mathbb{E}[\max(Y,Z)] = \int_{0}^{\infty}\Pr(\max(Y,Z)>l)\,dl 
   = \int_{0}^{\Omega_{\max}}\![1-F_Y(l)F_Z(l)]\,dl,
\)
where the limits truncate at $\Omega_{\max}=\max(\Omega_{\ell},\Omega_{k})$.  
Splitting the integration domain into $[0,\Omega_{\min}]$ and $[\Omega_{\min},\Omega_{\max}]$, substituting the Rician CDFs from Lemma~\ref{lem:outage}, and using the Marcum-$Q$ function series expansion in~\cite[Eq.~(4.47)]{alounibook} yield expressions in \eqref{eq_sinr_energy}-\eqref{eq:F_G_defs}.
\end{IEEEproof}
\vspace{-0.2cm}

\begin{figure*}[h]
    \centering
    \subfloat[System outage probability versus threshold~$\zeta$.]{\includegraphics[width=0.33\textwidth]{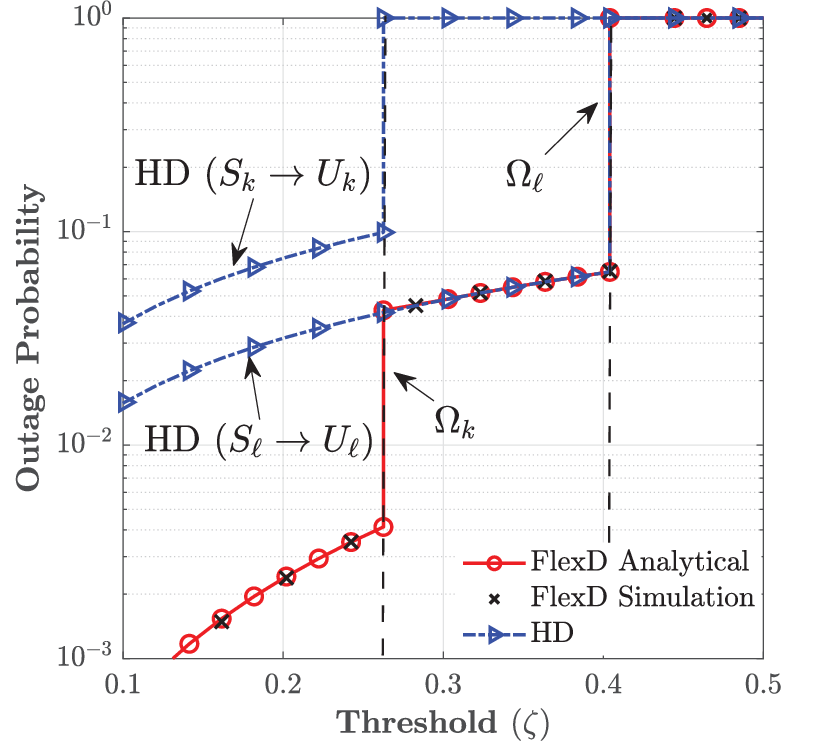}\label{fig:outage_thr}} 
    \hfill
    \subfloat[Energy efficiency versus transmit power~$P$.]{\includegraphics[width=0.33\textwidth]{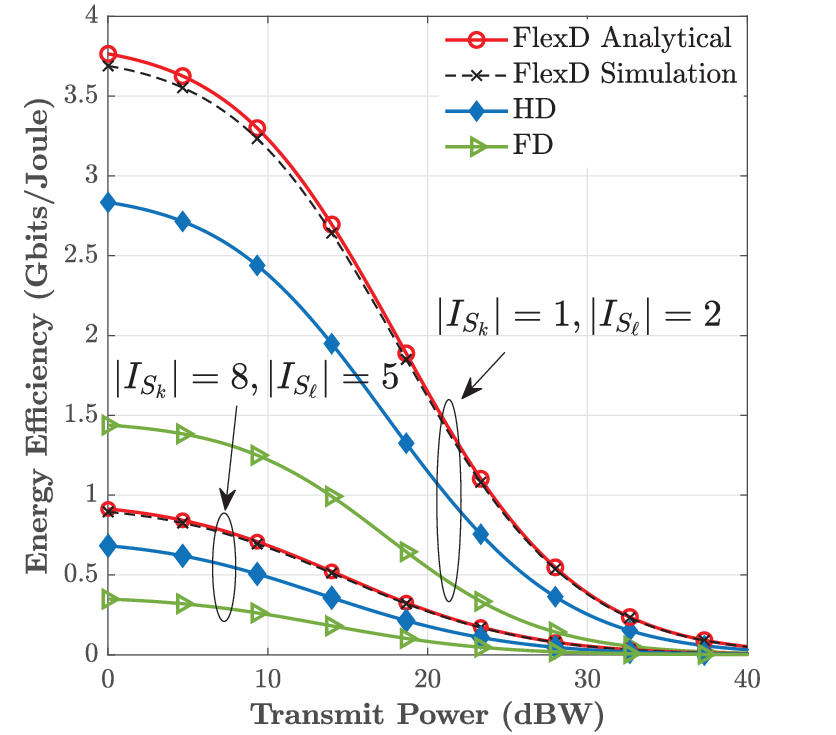}\label{fig_all}} 
    \hfill
    \subfloat[Temporal variation of EE over time slots~$t$.]{\includegraphics[width=0.33\textwidth]{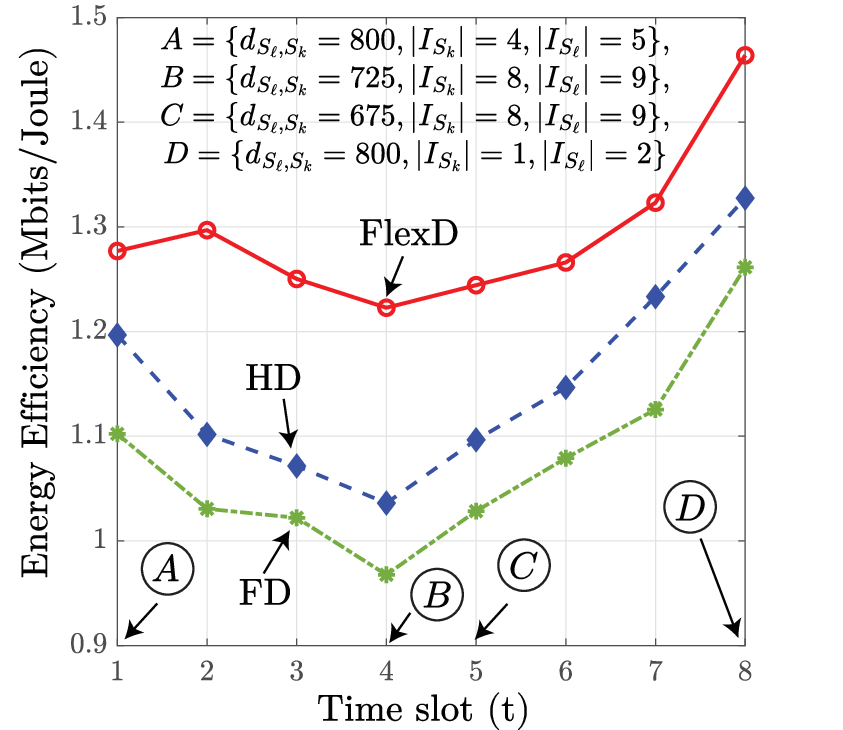}\label{fig:optim}} 
    \caption{Performance comparison of FlexD with HD and FD baselines. 
(a) Lower outage  via adaptive direction selection, 
(b) higher energy efficiency per unit power, and 
(c) robustness to dynamic channel and interference variations across slots.}
    \vspace{-4mm}
\end{figure*}

\subsection{Extension to NLoS, Multiuser, and Multihop Scenarios}

The proposed analysis is naturally extendable to non-line-of-sight (NLoS), multiuser, and multihop configurations.  

In \textit{NLoS conditions}, the SGL may follow a Nakagami-$m$ fading instead of Rician fading.  
Accordingly, the CDF $F_{X}(\cdot,\cdot)$ in Lemma~\ref{lem:outage} and~\eqref{eq:Rerg} is replaced by the Nakagami-$m$ form  
\(
F_X(x)=1-\exp\!\left(-{m x}/{\bar{\gamma}}\right)
\sum_{r=0}^{m-1}{(m x/\bar{\gamma})^{r}}/{r!},
\)
while all subsequent derivations remain structurally identical.

For the \textit{multiuser case}, let $\Gamma_{mu}\!\triangleq\!\max_{i\in\mathcal{U}}\Gamma_{i}$ denote the instantaneous SINR of the best user link among the set of served users $\mathcal{U}$.  
The corresponding outage probability extends from~\eqref{eq:CDF_gamma_star} as  
\(P_{\text{o,mu}}(\zeta)=\Pr(\Gamma_{mu}\!\le\!\zeta)
=\prod_{i\in\mathcal{U}}F_{\Gamma_{u}}(\zeta)\),
assuming independent user links  with CDF for each $F_{\Gamma_{u}}(\zeta)$.

For the \textit{multihop case}, where a packet traverses $H$ serial ISLs before reaching its serving satellite, the end-to-end SINR becomes  
\(\Gamma_{\text{mh}}=\min_{h\in\{1,\dots,H\}}\Gamma_{h}\),
capturing the bottleneck of the weakest hop.  
Its CDF is given by  
\(F_{\Gamma_{\text{mh}}}(\zeta)
=1-\prod_{h=1}^{H}\!\big(1-F_{\Gamma_{h}}(\zeta)\big)\),
from which the corresponding outage and ergodic-rate expressions follow directly.

\section{Numerical Analysis}
Numerical simulations are conducted for LEO satellites operating at altitudes of $500$-$1200$~km~\cite{9632432} with orbital inclinations of $50^{\circ}$-$98^{\circ}$. The inter-satellite distance is $600$-$1200$~km, corresponding to angular separations of $4.5^{\circ}$-$10^{\circ}$ for circular orbits~\cite{radhakrishnan2016survey}. The RF-ISL operates at $f=25$~GHz (Ka-band) with a $500$~MHz bandwidth. The SGL channels $h_{S_k,U_\ell}$ and $h_{S_\ell,U_k}$ follow Rician fading with $|\mu|=1.56$ and $\sigma_g^2=1.3$. Receiver noise power is set to $-115$~dBm, and antenna gains range from $35$ to $40$~dBi.
\subsubsection{System Outage Performance}
Fig.~\ref{fig:outage_thr} shows the system outage probability versus threshold $\zeta$ with $(|\mathcal{I}_{S_k}|, |\mathcal{I}_{S_\ell}|) = (5,8)$, highlighting the reliability gain of the proposed FlexD scheme over conventional HD operation. The analytical results derived for FlexD in~\eqref{eq:CDF_gamma_star} closely match Monte Carlo simulations, validating the analysis. From a communication-theoretic perspective, the FlexD outage behavior arises from the composite bottleneck structure $\gamma_{S_\ell\to U_\ell}$ in~\eqref{eq:rate_to_sinr_equiv1} and $\gamma_{S_k\to U_k}$, where the instantaneous ISL SINRs $\gamma_{S_\ell\to S_k}$ and $\gamma_{S_k\to S_\ell}$ jointly interact with the backlog-limited terms $\gamma_{Q_{S_\ell\to U_\ell}}^{(W)}$ and $\gamma_{Q_{S_k\to U_k}}^{(W)}$. The two transition thresholds $\Omega_{k}$ and $\Omega_{\ell}$ partition the outage curve into three regions corresponding to distinct dominant constraints: (i) ISL-limited, (ii) backlog-limited, and (iii) downlink-limited operation. These regions collectively characterize how link quality and data availability couple to determine the end-to-end reliability.  
As observed, the FlexD curve exhibits continuous transitions across these regions, indicating adaptive rate balancing under moderate channel and buffer variations. In contrast, the HD curves-fixed to either $S_k\!\to\!U_k$ or $S_\ell\!\to\!U_\ell$ per slot-show distinct discontinuities at their respective thresholds, reflecting single-direction bottlenecks. FlexD’s consistent outage reduction highlights its adaptive channel utilization, maximizing mutual information over the two-hop chain and ensuring higher reliability than non-adaptive HD operation.

\subsubsection{Energy Efficiency Performance}
Fig.~\ref{fig_all} depicts the variation of energy efficiency (EE), measured in~Gbits/Joule, versus the total transmit power $P$ (dBW). The analytically derived upper-bounded EE from \eqref{eq:Rbar_approx} with $M\!\ge\!20$ (truncation limit of the infinite series) closely match simulations, and the actual EE curve without Jensen’s approximation is used for FlexD simulations.
For fair comparison, both HD and FlexD modes operate with identical transmit power~$P$ per slot, while  FD employs~$2P$ due to simultaneous bidirectional transmission. The FD mode further experiences residual self-interference (RSI) fixed at~$-120$~dBm-near the thermal noise floor-representing ideal FD conditions. We consider two interference configurations at $S_k$ and $S_\ell$: 
$(|\mathcal{I}_{S_k}|, |\mathcal{I}_{S_\ell}|) = (1,2)$ and $(8,5)$. At slot $t$, the configuration with fewer interferers achieves higher EE due to increased throughput, and vice versa. The results demonstrate that FlexD consistently achieves higher EE than both HD and FD. This gain stems from its direction-adaptive transmission, which maximizes the instantaneous throughput per unit energy. At $P=10$~dBW, with $(|\mathcal{I}_{S_k}|, |\mathcal{I}_{S_\ell}|) = (1,2)$ the EE values for FlexD, HD, and FD are approximately $3.2$, $2.4$, and $1.25$~Gbits/Joule, respectively, highlighting that FlexD can substantially improve power efficiency under energy-limited LEO satellite operations.

\subsubsection{Impact of Satellite Mobility on Energy Efficiency}
Fig.~\ref{fig:optim} shows the evolution of EE across time slots under dynamic conditions induced by satellite motion. Each slot experiences distinct EE values due to time-varying path loss and interference. For example, points~A and~D exhibit identical inter-satellite distances $d_{S_\ell,S_k}$, yet point~D achieves higher EE owing to reduced co-channel interference, demonstrating that interference power dominates when geometric losses are similar. Conversely, point~C yields higher EE than point~B despite equal interference levels, indicating that shorter $d_{S_\ell,S_k}$ (stronger LoS gain) dominates SINR performance.  

\begin{remark}[Energy-Efficiency Adaptability of FlexD]
The FlexD scheme consistently achieves higher EE across all time slots compared to HD and FD. Its robustness stems from slot-wise adaptation to instantaneous channel and interference conditions, effectively maximizing mutual information per joule under varying free-space path loss (FSPL) and inter-satellite interference (ISI).
\end{remark}

\section{Conclusion}
\label{sec:conclusion}

This paper presented a \emph{Flexible Duplex (FlexD)} transmission framework for low Earth orbit (LEO) satellite networks equipped with inter-satellite links (ISLs). The proposed scheme addresses the directional asymmetry inherent in half-duplex ISLs and the backlog-induced two-hop relaying problem arising from satellite mobility and fading. By formulating the ISL-downlink bottleneck in a unified SINR domain, closed-form expressions for throughput outage and energy efficiency were derived under deterministic ISLs and Rician satellite-ground channels. Analytical and Monte Carlo results showed close agreement, revealing distinct operating regimes driven by ISL quality and queue backlogs. Compared with conventional half- and full-duplex modes, FlexD achieves significantly lower outage probability and up to 30\% higher energy efficiency, attributed to its slot-wise direction adaptation 
under time-varying interference and path loss conditions. Future extensions include expanding the framework to multi-hop constellations with dynamic topology and utilizing learning-based methods to enable scalable duplex management in next-generation space networks.


\end{document}